\newtheorem{theorem}{Theorem}
\newtheorem{lemma}{Lemma}
\newtheorem{claim}{Claim}
\title{Modifying an Instance of the Super-Stable Matching Problem}
\author{Naoyuki Kamiyama%
\thanks{This work was supported by JSPS KAKENHI Grant Number JP20H05795 and 
JST ERATO Grant Number JPMJER2301, Japan.}}
\date{\small Institute of Mathematics for Industry, Kyushu University, Fukuoka, Japan\\
{\ttfamily kamiyama@imi.kyushu-u.ac.jp}}
\begin{document}

\maketitle

\begin{abstract}
Super-stability is one of the stability concepts in the stable matching 
problem with ties. 
It is known that 
there may not exist a super-stable matching, and 
the existence of a super-stable matching can be checked 
in polynomial time. 
In this paper, 
we consider the problem of modifying an instance of 
the super-stable matching problem by deleting some bounded number of 
agents in such a way that  
there exists a super-stable matching in the modified instance. 
First, we prove that if we are allowed to delete agents on only one side, 
then our problem can be solved in polynomial time. 
Interestingly, this result is obtained by carefully observing 
the existing algorithm for checking the existence 
of a super-stable matching. 
In addition, we prove that if we are allowed to delete agents on both sides,
then our problem is NP-complete.
\end{abstract} 

\section{Introduction}

The topic of this paper is the stable matching problem.
This problem was introduced by Gale and Shapley~\cite{GaleS62}, and 
it is one of the most famous mathematical models 
of matching. 
In the basic setting of the stable matching problem, 
it is assumed that 
each agent has a strict preference, i.e., 
the preference of an agent does not contain a tie. 
By contrast, we consider the stable matching problem 
with ties. 
In this paper, we especially focus on super-stability in the stable matching 
problem with ties. 
Super-stability is one of the stability concepts in the stable matching 
problem with ties (see, e.g., \cite{IwamaM08,Manlove13}).
Roughly speaking, super-stability guarantees that 
there does not exist an unmatched pair of agents 
such that the new partners are not worse than their current partners.
It is known that 
there may not exist a super-stable matching, and 
the existence of a super-stable matching can be checked 
in polynomial time~\cite{Irving94, Manlove99}. 
Furthermore, the problem of checking the existence of a 
super-stable matching was considered in  
the many-to-one setting~\cite{IrvingMS00},
the many-to-many setting~\cite{Scott05},
the setting with matroid constraints~\cite{Kamiyama15,Kamiyama19,Kamiyama20},
the setting with master lists~\cite{IrvingMS08,OMalley07}, and 
the student-project allocation probelm~\cite{OlaosebikanM22}. 

The aim of this paper is to consider how to cope with an instance of 
the stable matching problem where 
there does not exist a super-stable matching. 
More precisely, in this paper, 
we consider the problem of modifying an instance of 
the super-stable matching problem by deleting some bounded number of 
agents in such a way that 
there exists a super-stable matching in the modified instance. 
Similar problems for popular matchings 
were considered in \cite{WuLWC13,WuLWC14}. 

Our contribution is summarized as follows. 
First, we
prove that if we are allowed to delete agents on only one side, 
then our problem can be solved in polynomial time (see Section~\ref{section:one_side}). 
Interestingly, this positive result is obtained by carefully observing 
the existing algorithm~\cite{Irving94, Manlove99} for checking the existence 
of a super-stable matching. 
Next, we prove that if we are allowed to delete agents on both sides,
then our problem is NP-complete (see Section~\ref{section:both_sides}).

\section{Preliminaries} 

For each positive integer $z$, 
we define $[z] := \{1,2,\dots,z\}$. 
Define $[0] := \emptyset$. 

In this paper, we are given a finite simple undirected bipartite graph $G = (V, E)$ such that 
the vertex set $V$ is partitioned into $D$ and $H$,
and each edge in $E$ connects a vertex in $D$ and a vertex in $H$. 
We call a vertex in $D$ (resp.\ $H$) a \emph{doctor} (resp.\ \emph{hospital}). 
For each doctor $d \in D$ and each hospital $h \in H$, 
if there exists an edge in $E$ connecting $d$ and $h$, then 
$(d,h)$ denotes this edge.
For each subset $F \subseteq E$ and each doctor $d \in D$
(resp.\ hospital $h \in H$), we define 
$F(d)$ (resp.\ $F(h)$) as the set of edges $(d^{\prime},h^{\prime}) \in F$
such that 
$d^{\prime} = d$ (resp.\ $h^{\prime} = h$). 
Furthermore, for each subset $F \subseteq E$ and each subset $X \subseteq H$, 
we define $F(X) := \bigcup_{h \in X}F(h)$. 
For each subset $X \subseteq V$, 
we define $G\langle X \rangle$ 
as the subgraph of $G$ induced by 
$X$.
For each subset $X \subseteq V$, 
we define $E\langle X \rangle$ as 
the edge set of $G\langle X \rangle$.
For each subset $F \subseteq E$ and 
each vertex $v \in V$ such that $|F(v)| = 1$,  
we do not distinguish between $F(v)$ and the unique edge
in $F(v)$. 

For each vertex $v \in V$, we are given a transitive 
binary relation $\succsim_v$ on $E(v) \cup \{\emptyset\}$ 
satisfying the following conditions. 
\begin{itemize}
\item
For every pair of elements $e, f \in E(v) \cup \{\emptyset\}$,
at least one of $e \succsim_v f$ and $f \succsim_v e$ holds.  
\item
For every edge $e \in E(v)$, we have $e \succsim_v \emptyset$ and 
$\emptyset \not\succsim_v e$. 
\end{itemize}
For each vertex $v \in V$ and 
each pair of edges $e,f \in E(v)$, 
we write $e \succ_v f$ if $e \succsim_v f$ and 
$f \not\succsim_v e$. 

Let $X$ be a subset of $V$.
A subset $\mu \subseteq E\langle X \rangle$ 
is called a \emph{matching in $G\langle X \rangle$} 
if $|\mu(v)| \le 1$ holds for  
every vertex $v \in X$. 
For each matching $\mu$ in $G\langle X \rangle$ 
and each edge $e = (d,h) \in E\langle X \rangle \setminus \mu$, 
we say that $e$ \emph{blocks} $\mu$ if 
$e \succsim_d \mu(d)$ and 
$e \succsim_h \mu(h)$. 
A matching $\mu$ in $G\langle X \rangle$ 
is said to be \emph{super-stable} if there
does not exist an edge in $E\langle X \rangle \setminus \mu$ blocking $\mu$.

For each subset $X \subseteq V$, $X$ is said to be \emph{feasible}
if there exists 
a super-stable matching in $G\langle V \setminus X \rangle$. 
Define $\mathcal{F}$ as the family of 
feasible subsets of $V$. 
Furthermore, we define $\mathcal{F}_H$ as the set of elements 
$X \in \mathcal{F}$ such that $X \subseteq H$. 
In this paper, we consider the following problems. 
\begin{description}
\item[Problem~1.]
We are given a non-negative integer $q$.
The goal of this problem is to determine whether 
there exists an element $X \in \mathcal{F}_H$
such that $|X| \le q$.
\item[Problem~2.] 
We are given non-negative integers $q_1,q_2$.
The goal of this problem is to determine whether 
there exists an element $X \in \mathcal{F}$
such that $|X \cap D| \le q_1$ and $|X \cap H| \le q_2$.
\end{description}

In this paper, we first prove that 
Problem~1 can be solved in polynomial time (see Section~\ref{section:one_side}). 
Furthermore, we prove that Problem~2 is NP-complete (see Section~\ref{section:both_sides}). 

\section{Deleting Agents on One Side} 
\label{section:one_side} 

The goal of this section is to propose a polynomial-time 
algorithm for Problem~1. 

For each doctor $d \in D$ and
each subset $F \subseteq E$, 
we define ${\rm Ch}_d(F)$ 
as the set of edges $e \in F(d)$ such that 
$e \succsim_d f$ for every edge $f \in F(d)$. 
For each hospital $h \in H$ and 
each subset $F \subseteq E$, 
we define ${\rm Ch}_h(F)$ 
as the set of edges $e \in F(h)$ such that 
$e \succ_h f$ for every edge $f \in F(h)$.
Notice 
that ${\rm Ch}_h(F)$ may be empty. 
For each subset $F \subseteq E$, we define 
${\rm Ch}_D(F) := \bigcup_{d \in D}{\rm Ch}_d(F)$ and 
${\rm Ch}_H(F) := \bigcup_{h \in H}{\rm Ch}_h(F)$.
For each subset $F \subseteq E$, we define $D[F]$ as the set of doctors 
$d \in D$ such that $F(d) \neq \emptyset$.  

For each subset $X \subseteq H$, we define 
the subset ${\sf B}_X \subseteq E$ as the output of
Algorithm~\ref{alg:pre-process}.
Notice that although Algorithm~\ref{alg:pre-process} looks different from 
the algorithm in \cite{Irving94, Manlove99}, 
Algorithm~\ref{alg:pre-process} is basically  
the same as the algorithm in \cite{Irving94, Manlove99}. 
The main 
contribution of this section is
to reveal that the algorithm in \cite{Irving94, Manlove99}
gives more information about  
super-stable matchings. 

\begin{algorithm}[h]
Define $R_{X,0} := E(X)$.\\
Set $t := 0$.\\
\Do{$R_{X,t-1} \neq R_{X,t}$}
{
  Set $t := t + 1$.\\
  Define $P_{X,t} := {\rm Ch}_D(E \setminus R_{X,t-1})$.\\
  Define $Q_{X,t} := {\rm Ch}_H(P_{X,t} \cup R_{X,t-1}) \cap P_{X,t}$.\\
  Define $R_{X,t} := R_{X,t-1} \cup (P_{X,t} \setminus Q_{X,t})$.   
}
Define $k_X := t$.\\
Output $R_{X,k_X}$ as ${\sf B}_X$, and halt.
\caption{Algorithm for defining ${\sf B}_X$}
\label{alg:pre-process}
\end{algorithm}

\begin{lemma} \label{lemma:pre-process_iteration}
The number of iterations of Steps~3 to 8 of Algorithm~\ref{alg:pre-process}
is at most $|E|$.
\end{lemma}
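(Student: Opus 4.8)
The plan is to exhibit the cardinality $|R_{X,t}|$ as a monotone integer potential that strictly increases on every non-terminating iteration and is bounded above by $|E|$. First I would record two structural facts about the sequence $(R_{X,t})$ produced by the loop. The update rule $R_{X,t} := R_{X,t-1} \cup (P_{X,t} \setminus Q_{X,t})$ immediately gives $R_{X,t-1} \subseteq R_{X,t}$ for every $t \ge 1$, so the sequence is non-decreasing with respect to inclusion. An easy induction then shows $R_{X,t} \subseteq E$ for all $t$: the base case $R_{X,0} = E(X) \subseteq E$ holds, and since $P_{X,t} = {\rm Ch}_D(E \setminus R_{X,t-1}) \subseteq E$, each set added in the inductive step is a subset of $E$. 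Consequently $0 \le |R_{X,t}| \le |E|$ throughout.

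Next I would read off the termination behaviour of the \texttt{do}--\texttt{while} loop. The loop executes iteration $t$ and then proceeds to a further iteration precisely when the guard $R_{X,t-1} \neq R_{X,t}$ holds. For every iteration that is not the final one, combining this inequality with the inclusion $R_{X,t-1} \subseteq R_{X,t}$ established above yields the strict inclusion $R_{X,t-1} \subsetneq R_{X,t}$, and hence $|R_{X,t}| \ge |R_{X,t-1}| + 1$. Thus the integer potential $|R_{X,t}|$ increases by at least one on each non-terminating iteration.

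Finally I would combine the two observations: a non-negative integer that is bounded above by $|E|$ and strictly increases on each iteration except the last can undergo at most $|E|$ such strict increases, which bounds the total number of iterations of Steps~3 to 8 by $|E|$.

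I do not expect a genuine obstacle here; the argument is a standard monotone-potential termination bound. The only points that require care are correctly interpreting the guard of the \texttt{do}--\texttt{while} loop so as to justify that the change in $R_{X,t}$ is strict on every non-final pass, and verifying the invariant $R_{X,t} \subseteq E$, both of which are routine consequences of the set-union form of the update rule.
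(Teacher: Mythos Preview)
Your proof is correct and takes the same approach as the paper, whose entire argument is the one-line remark that $R_{X,t-1}\subseteq R_{X,t}$; you have simply written out the monotone-potential reasoning this inclusion entails. (Pedantically, both your count and the paper's yield $|E|+1$ rather than $|E|$ when $E(X)=\emptyset$, because of the final non-increasing pass of the loop, but this is immaterial for the polynomial bound.)
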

\begin{proof}
This lemma immediately follows from the fact that 
$R_{X,t-1} \subseteq R_{X,t}$.
\end{proof} 

Lemma~\ref{lemma:pre-process_iteration} implies that 
Algorithm~\ref{alg:pre-process} is a polynomial-time 
algorithm. 

\begin{lemma} \label{lemma:difference}
Let $X$ be a subset of $H$. 
Then for every doctor $d \in D$,  
every integer $t \in \{0\} \cup [k_X]$, and every pair 
of edges $e \in R_{X,t}(d) \setminus E(X)$ and $f \in E(d) \setminus R_{X,t}$,  
we have $e \succsim_d f$. 
\end{lemma}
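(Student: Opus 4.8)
The plan is to prove the statement by induction on $t \in \{0\} \cup [k_X]$, exploiting the monotonicity $R_{X,t-1} \subseteq R_{X,t}$ that already underlies Lemma~\ref{lemma:pre-process_iteration}. For the base case $t = 0$, since $R_{X,0} = E(X)$, every edge in $R_{X,0}(d)$ lies in $E(X)$, and hence $R_{X,0}(d) \setminus E(X) = \emptyset$; the statement is therefore vacuously true, as there is no candidate edge $e$ to consider.

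For the inductive step, fix $d \in D$ and suppose the claim holds for $t-1$. Take $e \in R_{X,t}(d) \setminus E(X)$ and $f \in E(d) \setminus R_{X,t}$. Since $R_{X,t-1} \subseteq R_{X,t}$, we have $f \in E(d) \setminus R_{X,t-1}$, a fact I would record first because it feeds both cases below. Using the defining recursion $R_{X,t} = R_{X,t-1} \cup (P_{X,t} \setminus Q_{X,t})$, I would split on how $e$ enters $R_{X,t}(d)$. If $e \in R_{X,t-1}(d)$, then $e \in R_{X,t-1}(d) \setminus E(X)$ (as $e \notin E(X)$) and $f \in E(d) \setminus R_{X,t-1}$, so the induction hypothesis applied at $t-1$ immediately gives $e \succsim_d f$. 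Otherwise $e \in P_{X,t}(d) \setminus Q_{X,t}(d) \subseteq P_{X,t}(d) = {\rm Ch}_d(E \setminus R_{X,t-1})$; since $f \in (E \setminus R_{X,t-1})(d)$, the definition of ${\rm Ch}_d$ yields $e \succsim_d f$ directly.

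The two cases cover every way in which $e$ can appear in $R_{X,t}(d)$, so this completes the induction. I do not expect a genuine obstacle here: the argument is a clean bookkeeping induction, and notably it never invokes transitivity of $\succsim_d$ nor the hospital-side sets $Q_{X,t}$ in an essential way. The only point that needs care is to confirm, via monotonicity, that the candidate edge $f$ not yet rejected at stage $t$ is also not rejected at stage $t-1$, which is exactly what licenses both the application of the induction hypothesis and the use of the definition of ${\rm Ch}_d$ at $f$.
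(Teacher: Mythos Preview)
Your proof is correct and follows the same induction on $t$ that the paper uses; the paper's version is simply more compressed, noting only that $R_{X,n}\setminus R_{X,n-1}\subseteq {\rm Ch}_D(E\setminus R_{X,n-1})$ and leaving the two cases (old $e$ versus newly added $e$) implicit. Your explicit case split and the observation that $f\in E(d)\setminus R_{X,t-1}$ via monotonicity are exactly the missing details behind the paper's one-line inductive step.
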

\begin{proof}
Let $d$ be a doctor in $D$. 
Since $R_{X,0} = E(X)$, this lemma holds when $t = 0$. 
Let $n$ be an integer in $[k_X]$. 
Assume that this lemma holds when $t = n-1$.
Then since 
$R_{X,n} \setminus R_{X,n-1}$
is a subset of ${\rm Ch}_D(E \setminus R_{X,n-1})$, 
this lemma holds when $t = n$. 
This completes the proof. 
\end{proof} 

\begin{lemma} \label{lemma:useful}
For every subset $X \subseteq H$, the following statements 
hold. 
\begin{description}
\item[\bf (R1)]
For every 
super-stable matching $\mu$
in $G\langle V \setminus X \rangle$, 
we have $\mu \subseteq E \setminus {\sf B}_X$.
\item[\bf (R2)]
${\sf B}_{\emptyset} \subseteq {\sf B}_X$.  
\end{description}
\end{lemma}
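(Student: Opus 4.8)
The plan is to prove both (R1) and (R2) by induction on the iteration counter of Algorithm~\ref{alg:pre-process}, exploiting that the sets $R_{X,t}$ grow monotonically. For (R1), fix a super-stable matching $\mu$ in $G\langle V\setminus X\rangle$ and show by induction on $t$ that $\mu \cap R_{X,t} = \emptyset$; since ${\sf B}_X = R_{X,k_X}$, this gives the claim. The base case $t=0$ is immediate because $R_{X,0} = E(X)$ while $\mu \subseteq E\langle V\setminus X\rangle$ uses no vertex of $X$. For the step, assume $\mu \cap R_{X,n-1} = \emptyset$ and take $e = (d,h) \in P_{X,n}\setminus Q_{X,n}$, aiming for $e \notin \mu$. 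Since $e \in P_{X,n}$, it lies outside $R_{X,n-1}$, so $h \notin X$ and $e$ is a most-preferred edge of $d$ in $E\setminus R_{X,n-1}$. Since $e \notin Q_{X,n} = {\rm Ch}_H(P_{X,n}\cup R_{X,n-1}) \cap P_{X,n}$, the edge $e$ is not the strict best of $h$ in $(P_{X,n}\cup R_{X,n-1})(h)$, so there is an edge $f = (d',h) \neq e$ in $(P_{X,n}\cup R_{X,n-1})(h)$ with $f \succsim_h e$. Assuming $e \in \mu$ (hence $\mu(h) = e$), I will show $f$ blocks $\mu$, a contradiction.

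The hospital-side condition $f \succsim_h \mu(h)$ is exactly $f \succsim_h e$, and $f\notin\mu$ since $\mu(h)=e\neq f$. For $f \succsim_{d'} \mu(d')$ I split into two cases, using the inductive hypothesis $\mu(d') \notin R_{X,n-1}$. If $f \in P_{X,n}$, then $f$ is a most-preferred edge of $d'$ in $E\setminus R_{X,n-1}$, which contains $\mu(d')$, so $f \succsim_{d'} \mu(d')$. If instead $f \in R_{X,n-1}$, then $f \in R_{X,n-1}(d')\setminus E(X)$ (as $h \notin X$), and Lemma~\ref{lemma:difference} with $t = n-1$ gives $f \succsim_{d'} \mu(d')$ (the case $\mu(d')=\emptyset$ being trivial). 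In both cases $f$ blocks $\mu$, completing the induction.

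For (R2), the key observation is that ${\sf B}_X$ is a fixed point of one iteration: writing $B := {\sf B}_X = R_{X,k_X} = R_{X,k_X-1}$, $P := {\rm Ch}_D(E\setminus B)$ and $Q := {\rm Ch}_H(P\cup B)\cap P$, the halting condition gives $P\setminus Q \subseteq B$. I then prove by induction on $t$ that $R_{\emptyset,t} \subseteq B$; the base case is trivial since $R_{\emptyset,0} = E(\emptyset) = \emptyset$. For the step, assume $R_{\emptyset,t-1}\subseteq B$ and take $e = (d,h) \in P_{\emptyset,t}\setminus Q_{\emptyset,t}$, aiming for $e \in B$; suppose not. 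Since $R_{\emptyset,t-1}\subseteq B$, we have $E\setminus B \subseteq E\setminus R_{\emptyset,t-1}$, so a most-preferred edge of $d$ in $E\setminus R_{\emptyset,t-1}$ that lies in $E\setminus B$ is also most-preferred in the smaller set; hence $e \in P$, and the fixed-point property together with $e\notin B$ forces $e \in Q$, i.e.\ $e$ is the strict best of $h$ in $(P\cup B)(h)$. On the other hand, $e\notin Q_{\emptyset,t}$ furnishes $f = (d',h)\neq e$ in $(P_{\emptyset,t}\cup R_{\emptyset,t-1})(h)$ with $f\succsim_h e$, and the same ``most-preferred survives to a subset'' argument (or simply $f\in R_{\emptyset,t-1}\subseteq B$) shows $f \in P\cup B$; this contradicts $e$ being the strict best of $h$ in $(P\cup B)(h)$. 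Hence $e\in B$, and taking $t = k_\emptyset$ yields ${\sf B}_\emptyset \subseteq {\sf B}_X$.

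I expect the main obstacle to be the second case in each induction, namely handling the edge $f$ that is \emph{already deleted} (lies in $R_{X,n-1}$, respectively $B$) rather than freshly proposed. In (R1) this is exactly where Lemma~\ref{lemma:difference} converts the hospital-side comparison $f\succsim_h e$ into the doctor-side comparison $f\succsim_{d'}\mu(d')$, and in (R2) it is where the fixed-point characterization of ${\sf B}_X$ must be invoked so that $f$ still competes with $e$ at $h$. The small but essential bookkeeping underlying both arguments is verifying that $h \notin X$ and $f \notin E(X)$, which guarantees that $f$ genuinely lives in $E\langle V\setminus X\rangle$ and that these lemmas apply.
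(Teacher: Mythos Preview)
Your proof is correct and, for (R1), follows the same argument as the paper: both fix a super-stable matching, proceed by induction on $t$ (the paper phrases it as a minimal-counterexample argument, which is equivalent), find the witness $f=(d',h)$ with $f\succsim_h e$, and split into the two cases $f\in P_{X,n}$ versus $f\in R_{X,n-1}$, invoking Lemma~\ref{lemma:difference} in the latter.

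For (R2) your argument differs slightly from the paper's. The paper proves the finer statement $R_{\emptyset,t}\subseteq R_{X,t}$ for all $t$, which requires first extending both runs to a common length $k=\max\{k_\emptyset,k_X\}$ and checking that the extended iterates still satisfy the recurrences. You instead compare $R_{\emptyset,t}$ directly to the terminal set $B={\sf B}_X$ and use the fixed-point property $P\setminus Q\subseteq B$ (equivalently $P=Q$, since $P\cap B=\emptyset$). The inductive core---``a ${\rm Ch}_D$-maximal edge in the larger set $E\setminus R_{\emptyset,t-1}$ that survives into the smaller set $E\setminus B$ remains ${\rm Ch}_D$-maximal there''---is identical in both proofs, but your packaging avoids the auxiliary claim about extended iterations at the cost of not obtaining the step-by-step containment $R_{\emptyset,t}\subseteq R_{X,t}$ (which the paper never uses anyway).
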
 
We give the proof of Lemma~\ref{lemma:useful} in Section~\ref{section:proof}. 
Although the proof of (R1) was given in \cite{Irving94, Manlove99}, 
we give its proof for completeness.
To the best of our knowledge, (R2) is a new observation. 

\begin{lemma} \label{lemma:pre-process_matching}
Let $X$ be a subset of $H$.
Then 
there exists a matching $\mu_X$ in $G\langle V \setminus X \rangle$ 
such that
$\mu_X \subseteq {\rm Ch}_D(E \setminus {\sf B}_X)$ and 
$\mu_X(d) \neq \emptyset$ holds 
for every doctor $d \in D[E \setminus {\sf B}_X]$.
\end{lemma}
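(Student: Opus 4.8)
The plan is to extract from the halting condition of Algorithm~\ref{alg:pre-process} the fact that ${\rm Ch}_D(E \setminus {\sf B}_X)$ already assigns at most one edge to each hospital; once this is known, simply selecting one top edge for each doctor produces the desired matching with no further effort.

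Write $k := k_X$ and abbreviate $P := {\rm Ch}_D(E \setminus {\sf B}_X)$. First I would determine what the loop condition of Steps~3 to 8 gives at termination. The loop halts when $R_{X,k-1} = R_{X,k}$. Since $R_{X,k} = R_{X,k-1} \cup (P_{X,k} \setminus Q_{X,k})$ and $P_{X,k} = {\rm Ch}_D(E \setminus R_{X,k-1})$ is disjoint from $R_{X,k-1}$, the equality $R_{X,k-1} = R_{X,k}$ forces $P_{X,k} \setminus Q_{X,k} = \emptyset$; combined with the inclusion $Q_{X,k} \subseteq P_{X,k}$ this yields $P_{X,k} = Q_{X,k}$. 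Because $R_{X,k-1} = R_{X,k} = {\sf B}_X$, we have $P_{X,k} = P$ and $Q_{X,k} = {\rm Ch}_H(P \cup {\sf B}_X) \cap P$, so the identity $P_{X,k} = Q_{X,k}$ reads $P \subseteq {\rm Ch}_H(P \cup {\sf B}_X)$.

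The key step is then to observe that the hospital choice function selects at most one edge per hospital. For any subset $F \subseteq E$ and any hospital $h$, if two distinct edges $e, e'$ both lay in ${\rm Ch}_h(F)$, then taking $f = e'$ would give $e \succ_h e'$ and taking $f = e$ would give $e' \succ_h e$, contradicting transitivity and totality of $\succsim_h$; hence $|{\rm Ch}_h(F)| \le 1$. Applying this to $F = P \cup {\sf B}_X$ and using $P \subseteq {\rm Ch}_H(P \cup {\sf B}_X)$, I obtain $|P(h)| \le |{\rm Ch}_h(P \cup {\sf B}_X)| \le 1$ for every hospital $h$. In other words, the doctors' top choices over the surviving edges can conflict only at doctors, never at hospitals.

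Finally I would construct $\mu_X$ directly. For each doctor $d \in D[E \setminus {\sf B}_X]$ the set ${\rm Ch}_d(E \setminus {\sf B}_X) = P(d)$ is nonempty, so I pick an arbitrary edge $\mu_X(d) \in P(d)$ and let $\mu_X$ be the union of these chosen edges. By construction $\mu_X \subseteq P = {\rm Ch}_D(E \setminus {\sf B}_X)$ and $\mu_X(d) \neq \emptyset$ for every $d \in D[E \setminus {\sf B}_X]$. Each doctor is incident to exactly one edge of $\mu_X$, and each hospital is incident to at most one edge of $P \supseteq \mu_X$ by the previous paragraph, so $|\mu_X(v)| \le 1$ for every vertex $v$ and $\mu_X$ is a matching. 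Since $E(X) = R_{X,0} \subseteq {\sf B}_X$, every edge of $P$ avoids $X$ and therefore lies in $E\langle V \setminus X\rangle$, so $\mu_X$ is a matching in $G\langle V \setminus X\rangle$, as required. The main obstacle is isolating the right consequence of termination, namely that halting is equivalent to $P$ being fixed by the hospital step, and recognizing that the strict comparison $\succ_h$ in the definition of ${\rm Ch}_h$ is exactly what forces at most one surviving edge per hospital; after that the matching falls out immediately.
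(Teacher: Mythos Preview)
Your proof is correct and follows essentially the same approach as the paper: both arguments extract from the termination condition $R_{X,k_X-1}=R_{X,k_X}$ that $|P(h)|\le 1$ for every hospital $h$ (the paper phrases this by contradiction, you do it directly via $P=Q\subseteq{\rm Ch}_H(P\cup{\sf B}_X)$), and then build $\mu_X$ by choosing one edge of $P(d)$ per doctor. Your write-up is in fact slightly more complete, since you explicitly verify $\mu_X\subseteq E\langle V\setminus X\rangle$ using $E(X)=R_{X,0}\subseteq{\sf B}_X$, a point the paper leaves implicit.
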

\begin{proof}
In this proof, we define 
$P := P_{X,k_X}$,  
$Q := Q_{X,k_X}$,  
$R := R_{X,k_X}$, and  
$R_{-1} := R_{X,k_X-1}$
for notational simplicity. 

Since ${\sf B}_X = R = R_{-1}$, 
we have $P = {\rm Ch}_D(E \setminus {\sf B}_X)$. 
Thus, if $|P(h)| \le 1$ for every hospital $h \in H$, 
then 
a matching $\mu_X$ satisfying the conditions in this lemma 
can be obtained by setting 
$\mu_X(d)$ to be an arbitrary edge in $P(d)$ 
for each doctor $d \in D[E \setminus {\sf B}_X]$. 

Assume that there exists a hospital $h \in H$ such that 
$|P(h)| \ge 2$.
Then since the definition of ${\rm Ch}_h(\cdot)$ implies that 
${\rm Ch}_H(P \cup R_{-1})$ contains at most one edge in 
$P(h)$, 
$P \setminus Q$ is not empty. 
However, this contradicts the fact that 
$R_{-1} = R$.
This completes the proof.
\end{proof} 

Let $X$ be a subset of $H$. 
For each matching $\mu_X$ in $G\langle V \setminus X \rangle$
satisfying the conditions in Lemma~\ref{lemma:pre-process_matching}, 
a hospital $h \in H$ is said to be 
\emph{critical} with respect to $({\sf B}_X,\mu_X)$ if 
(i) $\mu_X(h) = \emptyset$, and (ii) 
at least one of ${\sf B}_X(h) \neq \emptyset$ and 
${\rm Ch}_D(E \setminus {\sf B}_X) \cap E(h) \neq \emptyset$ holds. 

In what follows, let $\mu_{\emptyset}$ be an arbitrary matching in $G$ 
satisfying the conditions in Lemma~\ref{lemma:pre-process_matching}. 
Define $S_{\emptyset}$ as the set of critical hospitals 
with respect to $({\sf B}_{\emptyset},\mu_{\emptyset})$.  

\begin{lemma} \label{lemma:upper_bound} 
$S_{\emptyset} \in \mathcal{F}_H$. 
\end{lemma}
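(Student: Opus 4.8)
The plan is to exhibit an explicit super-stable matching in $G\langle V \setminus S_{\emptyset}\rangle$, namely $\mu_{\emptyset}$ itself, and thereby conclude $S_{\emptyset}\in\mathcal{F}_H$. First I would record two routine structural facts. Since every hospital in $S_{\emptyset}$ satisfies condition~(i) of criticality, i.e.\ is unmatched in $\mu_{\emptyset}$, no edge of $\mu_{\emptyset}$ is incident to a hospital in $S_{\emptyset}$; hence $\mu_{\emptyset}\subseteq E\langle V\setminus S_{\emptyset}\rangle$ is a matching in $G\langle V\setminus S_{\emptyset}\rangle$, and $S_{\emptyset}\subseteq H$ is immediate from the definition of critical. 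It then remains to show that no edge of $E\langle V\setminus S_{\emptyset}\rangle\setminus\mu_{\emptyset}$ blocks $\mu_{\emptyset}$.

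The engine of the argument is the termination condition of Algorithm~\ref{alg:pre-process} for $X=\emptyset$. Writing $P:=P_{\emptyset,k_{\emptyset}}$, $Q:=Q_{\emptyset,k_{\emptyset}}$, from $R_{\emptyset,k_{\emptyset}}=R_{\emptyset,k_{\emptyset}-1}={\sf B}_{\emptyset}$ and the fact that $P={\rm Ch}_D(E\setminus R_{\emptyset,k_{\emptyset}-1})$ is disjoint from $R_{\emptyset,k_{\emptyset}-1}$, I get $P\setminus Q=\emptyset$, i.e.\ $P=Q={\rm Ch}_D(E\setminus{\sf B}_{\emptyset})$. Consequently, for every hospital $h$, each edge of $P(h)$ lies in ${\rm Ch}_h(P\cup R_{\emptyset,k_{\emptyset}-1})$; since ${\rm Ch}_h(\cdot)$ has at most one element this recovers $|P(h)|\le 1$ (as in the proof of Lemma~\ref{lemma:pre-process_matching}), and, more importantly, it shows that the unique surviving proposal at $h$ (if any) satisfies $\mu_{\emptyset}(h)\succ_h f$ for every $f\in{\sf B}_{\emptyset}(h)$. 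I regard this as the hospital-side companion of Lemma~\ref{lemma:difference}, and it is the crux of the whole argument. Alongside it I would use Lemma~\ref{lemma:pre-process_matching}: every doctor $d$ with a surviving edge is matched, with $\mu_{\emptyset}(d)\in{\rm Ch}_d(E\setminus{\sf B}_{\emptyset})$.

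Now suppose $e=(d,h)$ with $h\notin S_{\emptyset}$ blocks $\mu_{\emptyset}$, so $e\succsim_d\mu_{\emptyset}(d)$ and $e\succsim_h\mu_{\emptyset}(h)$ with $e\notin\mu_{\emptyset}$. I split on whether $e$ is bad. If $e\notin{\sf B}_{\emptyset}$, then $d$ has a surviving edge, so $\mu_{\emptyset}(d)$ exists and is weakly best among surviving edges of $d$; combined with $e\succsim_d\mu_{\emptyset}(d)$ this forces $e\in{\rm Ch}_d(E\setminus{\sf B}_{\emptyset})=P$, hence $e\in P(h)$. If $h$ were unmatched, then $P(h)\neq\emptyset$ gives condition~(ii), making $h$ critical and contradicting $h\notin S_{\emptyset}$; so $h$ is matched and $\mu_{\emptyset}(h)\in P(h)$ as well. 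As $|P(h)|\le 1$, this yields $e=\mu_{\emptyset}(h)\in\mu_{\emptyset}$, contradicting $e\notin\mu_{\emptyset}$. If instead $e\in{\sf B}_{\emptyset}$, then ${\sf B}_{\emptyset}(h)\neq\emptyset$, so again an unmatched $h$ would be critical; hence $h$ is matched and $\mu_{\emptyset}(h)$ is a surviving proposal at $h$. By the crux observation $\mu_{\emptyset}(h)\succ_h e$, so $e\not\succsim_h\mu_{\emptyset}(h)$, contradicting the hospital-side blocking inequality. Either way no blocking edge exists, so $\mu_{\emptyset}$ is super-stable in $G\langle V\setminus S_{\emptyset}\rangle$ and $S_{\emptyset}\in\mathcal{F}_H$.

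The step I expect to be the main obstacle is the bad-edge case: a priori a deleted edge $e$ at a matched hospital $h$ could still be weakly preferred by $h$ to its partner $\mu_{\emptyset}(h)$, which would leave a genuine blocking pair even after removing $S_{\emptyset}$. The termination identity $P_{\emptyset,k_{\emptyset}}=Q_{\emptyset,k_{\emptyset}}$ is exactly what rules this out, since it guarantees that the edge each hospital keeps strictly dominates everything that hospital ever rejected into ${\sf B}_{\emptyset}$. Isolating this identity and confirming that it delivers $\mu_{\emptyset}(h)\succ_h e$ for all $e\in{\sf B}_{\emptyset}(h)$ is the part I would be most careful about; the surviving-edge case, by contrast, reduces cleanly to $|P(h)|\le 1$ together with the definition of critical.
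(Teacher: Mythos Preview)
Your proof is correct and follows essentially the same approach as the paper: both show that $\mu_{\emptyset}$ itself is a super-stable matching in $G\langle V\setminus S_{\emptyset}\rangle$, split on whether a potential blocking edge $e$ lies in ${\sf B}_{\emptyset}$, and use the termination condition $R_{\emptyset,k_{\emptyset}}=R_{\emptyset,k_{\emptyset}-1}$ together with the definition of critical hospital to rule out each case. The only difference is organizational: you front-load the identity $P=Q$ and extract from it the two structural facts $|P(h)|\le 1$ and $\mu_{\emptyset}(h)\succ_h f$ for all $f\in{\sf B}_{\emptyset}(h)$, whereas the paper argues each case by deriving $\mu_{\emptyset}(h)\in P\setminus Q$ and then invoking termination for a contradiction; the logical content is the same.
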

\begin{proof}
In this proof, we define 
$P := P_{\emptyset,k_{\emptyset}}$,  
$Q := Q_{\emptyset,k_{\emptyset}}$,  
$R := R_{{\emptyset},k_{\emptyset}}$, 
$R_{-1} := R_{\emptyset,k_{\emptyset}-1}$,
$\mu := \mu_{\emptyset}$, and  
$S := S_{\emptyset}$
for notational simplicity. 

Since the definition of a critical hospital implies that 
$\mu(h) = \emptyset$ for every hospital $h \in S$,
$\mu$ is a matching in $G \langle V \setminus S \rangle$.
Thus, we prove that $\mu$ is a super-stable matching in 
$G \langle V \setminus S \rangle$. 

Let $e = (d,h)$ be an edge in 
$E\langle V \setminus S \rangle \setminus \mu$.
Then we prove that $e$ does not block $\mu$.
Clearly, if $\mu(d) \succ_d e$, then 
the proof is done. 
Thus, we assume that 
$e \succsim_d \mu(d)$. 

We first consider the case where 
$e \in {\sf B}_{\emptyset}$. 
If $\mu(h) = \emptyset$, then 
since $e \in {\sf B}_{\emptyset}(h)$, 
we have $h \in S$. 
However, this contradicts the fact that $e \notin E(S)$.
Thus, we can assume that 
$\mu(h) \neq \emptyset$.
Assume that $e \succsim_h \mu(h)$. 
Then we derive a contradiction. 
Since $\mu$ satisfies the conditions in 
Lemma~\ref{lemma:pre-process_matching}, 
$\mu \subseteq {\rm Ch}_D(E \setminus {\sf B}_{\emptyset})$. 
Thus, since 
${\sf B}_{\emptyset} = R = R_{-1}$, 
we have 
$\mu(h) \in P = {\rm Ch}_D(E \setminus R_{-1})$. 
In addition, since $e \in {\sf B}_{\emptyset} = R_{-1}$ and $e \succsim_h \mu(h)$, 
we have 
$\mu(h) \notin Q$. 
Thus, we have
$\mu(h) \in P \setminus Q$. 
However, this contradicts the fact that 
$R_{-1} = R$. 

Next we consider the case where 
$e \notin {\sf B}_{\emptyset}$. 
Since $\mu(d) \in {\rm Ch}_D(E \setminus {\sf B}_{\emptyset})$
and $e \succsim_d \mu(d)$, 
we have 
$e \in {\rm Ch}_D(E \setminus {\sf B}_{\emptyset})$. 
Thus, if $\mu(h) = \emptyset$, then 
$h \in S$. 
This contradicts the fact that $e \in E(S)$.
Thus, we can assume that 
$\mu(h) \neq \emptyset$. 
Assume that $e \succsim_h \mu(h)$. 
Then we derive a contradiction. 
Since $\mu$ satisfies the conditions in 
Lemma~\ref{lemma:pre-process_matching}, 
we have $\mu(h) \in P$. 
In addition, since $e \in {\rm Ch}_D(E \setminus {\sf B}_{\emptyset}) = P$
and $e \succsim_h \mu(h)$, 
we have $\mu(h) \notin Q$. 
Thus, 
$\mu(h) \in P \setminus Q$. 
However, this contradicts the fact that 
$R_{-1} = R$. 
This completes the proof. 
\end{proof} 

\begin{lemma} \label{lemma:lower_bound} 
For every element $X \in \mathcal{F}_H$, 
we have $|S_{\emptyset}| \le |X|$. 
\end{lemma}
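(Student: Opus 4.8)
The plan is to fix an arbitrary $X \in \mathcal{F}_H$ together with a super-stable matching $\mu$ in $G\langle V\setminus X\rangle$, and to construct an injection from $S_\emptyset$ into $X$; this yields $|S_\emptyset|\le|X|$ immediately. The starting observation is that, combining (R1) with (R2), we have $\mu\subseteq E\setminus{\sf B}_X\subseteq E\setminus{\sf B}_\emptyset$. In particular every doctor matched by $\mu$ already has an edge outside ${\sf B}_\emptyset$, so it lies in $D[E\setminus{\sf B}_\emptyset]$ and is therefore matched by $\mu_\emptyset$ by Lemma~\ref{lemma:pre-process_matching}. Throughout I will write $\mu_\emptyset$ for the fixed matching used to define $S_\emptyset$ and freely use that $\mu_\emptyset\subseteq P:={\rm Ch}_D(E\setminus{\sf B}_\emptyset)$, that each hospital carries at most one $P$-edge, and that $P$-edges are top choices of their doctors in $E\setminus{\sf B}_\emptyset$, while edges of ${\sf B}_\emptyset$ are weakly preferred by their doctors to everything outside ${\sf B}_\emptyset$ (Lemma~\ref{lemma:difference} with $t=k_\emptyset$ and $E(\emptyset)=\emptyset$).

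The heart of the argument consists of two blocking observations, each proved by exhibiting an edge that would block $\mu$ if the conclusion failed. First, every hospital $h$ with $\mu_\emptyset(h)\neq\emptyset$ and $\mu(h)=\emptyset$ must lie in $X$: otherwise the edge $\mu_\emptyset(h)=(d,h)$ satisfies $(d,h)\succsim_d\mu(d)$ (it is a top choice of $d$ in $E\setminus{\sf B}_\emptyset$ and $\mu(d)\in(E\setminus{\sf B}_\emptyset)(d)\cup\{\emptyset\}$) and $(d,h)\succsim_h\mu(h)=\emptyset$, and it is not in $\mu$, so it blocks $\mu$, a contradiction. Second, every critical hospital $h\in S_\emptyset\setminus X$ is matched by $\mu$: by definition $\mu_\emptyset(h)=\emptyset$, and $h$ carries a demand edge $(d,h)$ lying in $P$ or in ${\sf B}_\emptyset$; in either case $(d,h)\succsim_d\mu(d)$, so if $\mu(h)=\emptyset$ this edge would again block $\mu$.

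With these in hand I would analyse the symmetric difference $\mu\triangle\mu_\emptyset$, whose components are alternating paths and cycles since both are matchings. Each $h\in S_\emptyset\setminus X$ has $\mu_\emptyset(h)=\emptyset$ and $\mu(h)\neq\emptyset$, so it has degree one in $\mu\triangle\mu_\emptyset$ and is the endpoint of an alternating path whose first edge is a $\mu$-edge. Tracing this path, every doctor it reaches is entered along a $\mu$-edge, hence is matched by $\mu$, hence lies in $D[E\setminus{\sf B}_\emptyset]$ and is matched by $\mu_\emptyset$ along a distinct edge, so the path always continues past a doctor. Consequently it cannot terminate at a doctor, and a parity check (hospitals occupy even positions, doctors odd positions, and the edges alternate $\mu,\mu_\emptyset,\dots$) forces the opposite endpoint to be a hospital reached along a $\mu_\emptyset$-edge; such a hospital is $\mu_\emptyset$-matched but $\mu$-unmatched, hence belongs to $X$ by the first blocking observation. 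This defines a map $\phi$ sending each $h\in S_\emptyset\setminus X$ to the $X$-endpoint of its path.

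Finally I would assemble the injection. Distinct hospitals of $S_\emptyset\setminus X$ lie in distinct components of $\mu\triangle\mu_\emptyset$ (each such component has exactly one endpoint of this type), so $\phi$ is injective, and its image consists of hospitals $h$ with $\mu_\emptyset(h)\neq\emptyset$, which are disjoint from $S_\emptyset\cap X$ since every hospital of $S_\emptyset$ satisfies $\mu_\emptyset(h)=\emptyset$. Combining the identity on $S_\emptyset\cap X$ with $\phi$ on $S_\emptyset\setminus X$ therefore yields an injection of $S_\emptyset$ into $X$, and hence $|S_\emptyset|\le|X|$. The main obstacle is the alternating-path endpoint analysis in the previous paragraph: one must rule out the path closing into a cycle or ending at a doctor, and must verify the parity bookkeeping that guarantees the far endpoint is exactly a $\mu_\emptyset$-matched, $\mu$-unmatched hospital; the two blocking observations are precisely what make this bookkeeping go through.
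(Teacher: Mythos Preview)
Your proof is correct, and it follows a genuinely different route from the paper's.

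Both arguments rest on the same two blocking facts: (a) every $h\in S_\emptyset\setminus X$ must be matched by the super-stable matching $\mu$ in $G\langle V\setminus X\rangle$ (this is the paper's Claim~2), and (b) any hospital $h\notin X$ with $\mu_\emptyset(h)\neq\emptyset$ and $\mu(h)=\emptyset$ would yield a blocking edge (this is exactly how the paper derives its final contradiction). Both also use that every doctor matched by $\mu$ is matched by $\mu_\emptyset$, which is the content of the paper's Claim~1.

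Where the two diverge is in how these pieces are combined. The paper argues by contradiction: assuming $|S_\emptyset|>|X|$, it sets $H_\emptyset=\{h:\mu_\emptyset(h)\neq\emptyset\}$ and $H_X=\{h\in H\setminus X:\mu(h)\neq\emptyset\}$, and via an inequality chain deduces $H_X\cap H_\emptyset\subsetneq H_\emptyset\setminus X$, producing a single hospital on which (b) fires. You instead trace alternating paths in $\mu\mathbin{\triangle}\mu_\emptyset$ starting from each $h\in S_\emptyset\setminus X$; observation (a) guarantees the path starts, the ``every $\mu$-matched doctor is $\mu_\emptyset$-matched'' fact forces the path through every doctor, and (b) forces the far endpoint into $X$, giving an explicit injection $S_\emptyset\hookrightarrow X$. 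Your approach is the standard augmenting-path mechanism from matching theory and is more constructive; the paper's counting argument avoids the path bookkeeping but needs the slightly delicate chain of inequalities. The underlying structural input (Lemmas~\ref{lemma:difference}, \ref{lemma:useful}, \ref{lemma:pre-process_matching}) is identical in both.
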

\begin{proof}
Assume that there exists an element $X \in \mathcal{F}_H$ 
such that $|S_{\emptyset}| > |X|$. 
Then since $X \in \mathcal{F}_H$, 
there exists a super-stable matching $\sigma$ in $G \langle V \setminus X \rangle$. 
Notice that (R1) of Lemma~\ref{lemma:useful} implies that 
$\sigma \subseteq E \setminus {\sf B}_X$. 

\begin{claim} \label{claim_1:lemma:lower_bound}
$|\mu_{\emptyset}| \ge |\sigma|$. 
\end{claim}
\begin{proof}
Recall that 
$\mu_{\emptyset}(d) \neq \emptyset$ holds for every 
doctor $d \in D[E \setminus {\sf B}_{\emptyset}]$. 
Thus, in order to prove this claim, 
it is sufficient to prove that, for every doctor $d \in D$ such that 
$E(d) \subseteq {\sf B}_{\emptyset}$, $\sigma(d) = \emptyset$.
Let $d$ be a doctor in $D$ such that 
$E(d) \subseteq {\sf B}_{\emptyset}$.
Then (R2) of Lemma~\ref{lemma:useful} implies that 
$E(d) \subseteq {\sf B}_X$. 
Thus, since $\sigma \subseteq E \setminus {\sf B}_X$, 
we have $\sigma(d) = \emptyset$. 
This completes the proof. 
\end{proof}

\begin{claim} \label{claim_2:lemma:lower_bound}
For every hospital $h \in S_{\emptyset} \setminus X$, 
we have $\sigma(h) \neq \emptyset$. 
\end{claim}
\begin{proof}
Assume that there exists a hospital 
$h \in S_{\emptyset} \setminus X$
such that $\sigma(h) = \emptyset$. 

We first consider the case where 
there exists an edge $e = (d,h) \in {\sf B}_{\emptyset}$. 
In this case, it follows from (R2) of Lemma~\ref{lemma:useful} that 
$e \in {\sf B}_X$. 
Since $h \notin X$, we have 
$e \notin E(X)$. 
Thus, since $\sigma \subseteq E \setminus {\sf B}_X$, 
Lemma~\ref{lemma:difference} implies that 
$e \succsim_d \sigma(d)$. 
This implies that since $\sigma(h) = \emptyset$,  
$e$ blocks $\sigma$. 
This contradicts the fact that $\sigma$ is a super-stable matching in 
$G\langle V \setminus X \rangle$. 

Next, we consider the case where 
there exists an edge $e = (d,h) \in 
{\rm Ch}_D(E \setminus {\sf B}_{\emptyset})$. 
If $e \in {\sf B}_X$, then 
we can prove this case in the same way as the first case. 
Thus, we can assume that 
$e \notin {\sf B}_X$. 
Then since 
$e \in {\rm Ch}_D(E \setminus {\sf B}_{\emptyset})$ and 
${\sf B}_{\emptyset} \subseteq {\sf B}_X$, 
we have 
$e \in {\rm Ch}_D(E \setminus {\sf B}_X)$.
Thus,
since $\sigma \subseteq E \setminus {\sf B}_X$, 
we have $e \succsim_d \sigma(d)$. 
Since $\sigma(h) = \emptyset$,  
$e$ blocks $\sigma$. 
This contradicts the fact that $\sigma$ is a super-stable matching in 
$G\langle V \setminus X \rangle$. 
This completes the proof. 
\end{proof}

Define 
$H_{\emptyset} := \{h \in H \mid \mu_{\emptyset}(h) \neq \emptyset\}$
and 
$H_{X} := \{h \in H \setminus X \mid \sigma(h) \neq \emptyset\}$.
Notice that 
since $H_X \subseteq H \setminus X$, 
we have 
$H_X \cap H_{\emptyset} \subseteq H_{\emptyset} \setminus X$. 
Here we prove that 
$H_X \cap H_{\emptyset} \subsetneq H_{\emptyset} \setminus X$. 
Assume that 
$H_X \cap H_{\emptyset} = H_{\emptyset} \setminus X$. 
Claim~\ref{claim_1:lemma:lower_bound} implies that 
$|H_{\emptyset}| \ge |H_X|$. 
Furthermore, Claim~\ref{claim_2:lemma:lower_bound} 
implies that 
$H_X \cap S_{\emptyset} = S_{\emptyset} \setminus X$.
Thus, since $H_{\emptyset} \cap S_{\emptyset} = \emptyset$ and 
$|S_{\emptyset}| > |X|$, we have
\begin{equation*}
\begin{split}
& |H_{\emptyset}| + |S_{\emptyset}| > |H_X| + |X| 
\ge |H_X \cap H_{\emptyset}| + |H_X \cap S_{\emptyset}| + |X|
= |H_{\emptyset} \setminus X| + |S_{\emptyset} \setminus X| + |X|\\
& = |H_{\emptyset}| - |H_{\emptyset} \cap X| 
+ |S_{\emptyset}| - |S_{\emptyset} \cap X| + |X|
\ge |H_{\emptyset}| + |S_{\emptyset}| - |X| + |X|
= |H_{\emptyset}| + |S_{\emptyset}|.
\end{split} 
\end{equation*}
However, this is a contradiction. 
Thus, we have 
$H_X \cap H_{\emptyset} \subsetneq H_{\emptyset} \setminus X$.

Since 
$H_X \cap H_{\emptyset} \subsetneq H_{\emptyset} \setminus X$, 
there exists a hospital $h \in H_{\emptyset} \setminus X$ and 
$h \notin H_X$, i.e.,  
there exists a hospital $h \in H \setminus X$ such that 
$\mu_{\emptyset}(h) \neq \emptyset$ and 
$\sigma(h) = \emptyset$. 
Assume that $\mu_{\emptyset}(h) = (d,h)$.

If $\mu_{\emptyset}(h) \in {\sf B}_X$, then 
since $\mu_{\emptyset}(h) \notin E(X)$ and 
$\sigma \subseteq E \setminus {\sf B}_X$, 
Lemma~\ref{lemma:difference} implies that 
$\mu_{\emptyset}(h) \succsim_d \sigma(d)$. 
If 
$\mu_{\emptyset}(h) \notin {\sf B}_X$, then 
since $\mu_{\emptyset}(h) \in {\rm Ch}_D(E \setminus {\sf B}_{\emptyset})$ and 
${\sf B}_{\emptyset} \subseteq {\sf B}_X$, 
we have 
$\mu_{\emptyset}(h) \in {\rm Ch}_D(E \setminus {\sf B}_X)$. 
This implies that 
since $\sigma \subseteq E \setminus {\sf B}_X$, 
$\mu_{\emptyset}(h) \succsim_d \sigma(d)$. 
That is, in both cases, 
$\mu_{\emptyset}(h) \succsim_d \sigma(d)$. 
Thus, since 
$\sigma(h) = \emptyset$, 
$\mu_{\emptyset}(h)$ blocks $\sigma$.
However, this contradicts the fact that $\sigma$ is a super-stable matching in 
$G\langle V \setminus X \rangle$. 
This completes the proof. 
\end{proof} 

\begin{theorem} \label{theorem:problem_1}
There exists an element $X \in \mathcal{F}_H$ such that 
$|X| \le q$
if and only if $|S_{\emptyset}| \le q$. 
\end{theorem}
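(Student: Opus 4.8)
The plan is to read off the theorem directly from the two preceding lemmas, which together establish that $S_{\emptyset}$ is a minimum-cardinality element of $\mathcal{F}_H$. Concretely, Lemma~\ref{lemma:upper_bound} witnesses that $S_{\emptyset}$ itself lies in $\mathcal{F}_H$, and Lemma~\ref{lemma:lower_bound} certifies that no feasible subset of $H$ can be strictly smaller than $S_{\emptyset}$. Since Problem~1 merely asks whether some $X \in \mathcal{F}_H$ with $|X| \le q$ exists, this reduces to comparing the optimal value $|S_{\emptyset}|$ against the budget $q$, which is exactly the biconditional to be proved.

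I would prove the two directions separately. For the ``if'' direction, I assume $|S_{\emptyset}| \le q$ and invoke Lemma~\ref{lemma:upper_bound} to conclude $S_{\emptyset} \in \mathcal{F}_H$; taking $X := S_{\emptyset}$ then exhibits an element of $\mathcal{F}_H$ with $|X| = |S_{\emptyset}| \le q$, as required. For the ``only if'' direction, I assume there exists $X \in \mathcal{F}_H$ with $|X| \le q$ and apply Lemma~\ref{lemma:lower_bound} to this $X$, yielding $|S_{\emptyset}| \le |X| \le q$. The transitivity of $\le$ closes the argument in each case.

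I do not anticipate any genuine obstacle at this stage: all the combinatorial work has already been discharged in establishing feasibility of $S_{\emptyset}$ (Lemma~\ref{lemma:upper_bound}) and its minimality (Lemma~\ref{lemma:lower_bound}), the latter of which is where the subtle counting argument involving $H_{\emptyset}$, $H_X$, and the critical hospitals resides. The only care needed here is to phrase the two implications so that each cites the correct lemma, and to note explicitly that $S_{\emptyset}$ is computable in polynomial time via Algorithm~\ref{alg:pre-process} (by Lemma~\ref{lemma:pre-process_iteration}), so that the equivalence in fact yields a polynomial-time decision procedure for Problem~1.
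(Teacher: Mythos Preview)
Your proposal is correct and mirrors the paper's own proof, which simply states that the theorem follows immediately from Lemma~\ref{lemma:upper_bound} and Lemma~\ref{lemma:lower_bound}. The additional remark about polynomial-time computability is also in line with what the paper notes right after the theorem.
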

\begin{proof}
This theorem immediately follows from Lemmas~\ref{lemma:upper_bound} and 
\ref{lemma:lower_bound}. 
\end{proof} 

By using Theorem~\ref{theorem:problem_1}, 
we can solve Problem~1 in polynomial time. 
First, we compute ${\sf B}_{\emptyset}$ and $\mu_{\emptyset}$ in polynomial time.
Then we determine whether $|S_{\emptyset}| \le q$. 

Here we give a remark on the algorithm in 
\cite{Irving94, Manlove99}.
First, the algorithm in 
\cite{Irving94, Manlove99} determines whether $S_{\emptyset} = \emptyset$. 
If $S_{\emptyset} = \emptyset$, then the algorithm in 
\cite{Irving94, Manlove99} concludes that 
there exists a super-stable matching. 
Otherwise, it concludes that there does not exist a super-stable matching. 

\subsection{Proof of Lemma~\ref{lemma:useful}}
\label{section:proof} 

Let $X$ be a subset of $H$. 
Lemma~\ref{lemma:useful} follows from 
the following lemmas.

\begin{lemma} \label{lemma:R1}
For every 
super-stable matching $\mu$
in $G\langle V \setminus X \rangle$, 
we have $\mu \subseteq E \setminus {\sf B}_X$.
\end{lemma}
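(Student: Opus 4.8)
The plan is to prove directly, by induction on the index $t$, that $\mu \cap R_{X,t} = \emptyset$ for every $t \in \{0\} \cup [k_X]$; since ${\sf B}_X = R_{X,k_X}$, this is exactly the claim. For the base case $t = 0$ I would use $R_{X,0} = E(X)$: because $\mu$ is a matching in $G\langle V \setminus X\rangle$, no edge of $\mu$ is incident to a hospital in $X$, so $\mu \cap E(X) = \emptyset$.

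For the inductive step, assume $\mu \cap R_{X,t-1} = \emptyset$ and suppose toward a contradiction that some edge $e = (d,h) \in \mu$ lies in $R_{X,t} \setminus R_{X,t-1} = P_{X,t} \setminus Q_{X,t}$. Since $e \in P_{X,t}$ but $e \notin Q_{X,t} = {\rm Ch}_H(P_{X,t} \cup R_{X,t-1}) \cap P_{X,t}$, the edge $e$ fails to lie in ${\rm Ch}_h(P_{X,t} \cup R_{X,t-1})$, so by totality of $\succsim_h$ there is an edge $f = (d',h) \in (P_{X,t} \cup R_{X,t-1})(h)$ with $f \neq e$ and $f \succsim_h e$. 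Because $e \in \mu$ we have $\mu(h) = e$, hence $f \succsim_h \mu(h)$, and since $f \neq e$ share the endpoint $h$, the matching property forces $f \notin \mu$. The goal is then to establish $f \succsim_{d'} \mu(d')$, which together with $f \succsim_h \mu(h)$ exhibits $f$ as an edge of $E\langle V \setminus X\rangle \setminus \mu$ blocking $\mu$; this is the desired contradiction, the edge $f$ indeed living in $G\langle V \setminus X\rangle$ because $X \subseteq H$ keeps both $d'$ and $h$ in $V \setminus X$.

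To obtain $f \succsim_{d'} \mu(d')$ I would split on how $f$ enters $P_{X,t} \cup R_{X,t-1}$. If $f \in P_{X,t} = {\rm Ch}_D(E \setminus R_{X,t-1})$, then $f$ is a best edge of $d'$ over $E(d') \setminus R_{X,t-1}$; by the inductive hypothesis $\mu(d')$, when nonempty, lies in $E(d') \setminus R_{X,t-1}$, so $f \succsim_{d'} \mu(d')$, and the inequality is trivial when $\mu(d') = \emptyset$. If instead $f \in R_{X,t-1}$, I would invoke Lemma~\ref{lemma:difference}: since $h \notin X$ we have $f \notin E(X)$, so $f \in R_{X,t-1}(d') \setminus E(X)$, while again $\mu(d') \in E(d') \setminus R_{X,t-1}$ (or $\mu(d') = \emptyset$), yielding $f \succsim_{d'} \mu(d')$.

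The main obstacle I anticipate is not conceptual but bookkeeping: verifying the hypotheses of Lemma~\ref{lemma:difference} precisely in the case $f \in R_{X,t-1}$, namely that $f \notin E(X)$ (from $h \notin X$) and that $\mu(d')$ avoids $R_{X,t-1}$ (the inductive hypothesis), together with handling the degenerate case $\mu(d') = \emptyset$ separately. Once these are in place the blocking argument closes uniformly across both cases, so the only genuine content is extracting the dominating edge $f$ from the failure of $e$ to belong to $Q_{X,t}$.
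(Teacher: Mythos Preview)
Your proposal is correct and follows essentially the same approach as the paper: both arguments extract the witness edge $f$ from the failure of $e$ to lie in $Q_{X,t}$, split on whether $f \in P_{X,t}$ or $f \in R_{X,t-1}$, and invoke Lemma~\ref{lemma:difference} in the latter case to show $f$ blocks $\mu$. The only cosmetic difference is that the paper phrases the induction as a minimal-counterexample argument (taking the least $z$ for which $R_{X,z}\setminus R_{X,z-1}$ contains an edge of some super-stable matching), whereas you do a direct induction on $t$ for a fixed $\mu$; the content is the same.
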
 
\begin{proof}
For notational simplicity, in this proof, we define 
$R_t := R_{X,t}$ for each integer $t \in [k_X]$. 

An edge $e \in {\sf B}_X$ is called a \emph{bad edge} if 
there exists a super-stable matching in $G\langle V \setminus X \rangle$ that 
contains $e$. 
If we can prove that there does not exist a bad edge 
in ${\sf B}_X$, then the proof is done. 
Assume that there exists a bad edge in ${\sf B}_X$. 
Define $\Delta$ as the set of integers $t \in [k_X]$ such that 
$R_{t} \setminus R_{t-1}$ contains 
a bad edge in ${\sf B}_X$. 
Let $z$ be the minimum integer in $\Delta$. 
Then $R_{z-1}$ does not contain a bad edge in ${\sf B}_X$.

Let $e = (d,h)$ be a bad edge in $R_{z} \setminus R_{z-1}$.
Then there exists a 
super-stable matching $\sigma$ in $G \langle V \setminus X \rangle$ such that 
$e \in \sigma$. 
Since $e \in R_{z} \setminus R_{z-1}$, 
we have $e \in {\rm Ch}_D(E \setminus R_{z-1})$
and 
there exists an edge $f = (p,h)\in {\rm Ch}_D(E \setminus R_{z-1}) \cup R_{z-1}$
such that $f \neq e$ and $f \succsim_h e$. 
Since $E(X) \subseteq R_{z-1}$, 
we have $e \notin E(X)$. 
Thus, 
$h \notin X$ and $f \notin E(X)$. 

First, we consider the case where 
$f \in {\rm Ch}_D(E \setminus R_{z-1})$.
If $\sigma(p) \succ_p f$, then 
$\sigma(p) \in R_{z-1}$.
In this case, 
$\sigma(p)$ is a bad edge in ${\sf B}_X$.
However,  
this contradicts the minimality of $z$. 
Thus, we can assume that $f \succsim_p \sigma(p)$. 
In this case, 
$f$ blocks $\sigma$. 
However, this contradicts the fact that 
$\sigma$ is a super-stable matching in $G\langle V \setminus X \rangle$. 

Next, we consider the case where 
$f \in R_{z-1}$.
If $\sigma(p) \in R_{z-1}$, then 
$\sigma(p)$ is a bad edge in ${\sf B}_X$.
This contradicts the minimality of $z$.
Thus, we can assume that 
$\sigma(p) \in E(p) \setminus R_{z-1}$.
Then  
since $f \notin E(X)$, 
Lemma~\ref{lemma:difference} implies that 
$f \succsim_p \sigma(p)$. 
Thus,
$f$ blocks $\sigma$. 
However, this contradicts the fact that 
$\sigma$ is a super-stable matching in $G\langle V \setminus X \rangle$. 
This completes the proof. 
\end{proof} 

\begin{lemma} \label{lemma:R2}
${\sf B}_{\emptyset} \subseteq {\sf B}_X$.
\end{lemma}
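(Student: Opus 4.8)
The plan is to prove the stronger statement that $R_{\emptyset,t} \subseteq {\sf B}_X$ holds for every integer $t \in \{0\} \cup [k_{\emptyset}]$, by induction on $t$; specializing to $t = k_{\emptyset}$ then yields ${\sf B}_{\emptyset} = R_{\emptyset,k_{\emptyset}} \subseteq {\sf B}_X$. The base case $t = 0$ is immediate, since $R_{\emptyset,0} = E(\emptyset) = \emptyset$. Before running the induction, I would record the fixed-point property of the $X$-run that drives the whole argument. Writing $P := {\rm Ch}_D(E \setminus {\sf B}_X)$, the halting condition $R_{X,k_X} = R_{X,k_X-1}$ forces $P_{X,k_X} \setminus Q_{X,k_X} = \emptyset$: indeed $P_{X,k_X} = P$ is disjoint from $R_{X,k_X-1} = {\sf B}_X$, so $P \setminus Q_{X,k_X} \subseteq P \cap {\sf B}_X = \emptyset$ (this is exactly the observation used in the proof of Lemma~\ref{lemma:pre-process_matching}). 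Hence $P = P \cap Q_{X,k_X} \subseteq {\rm Ch}_H(P \cup {\sf B}_X)$; that is, every doctor-optimal edge of $E \setminus {\sf B}_X$ is the \emph{strict} $\succ_h$-best edge of its hospital $h$ within $P \cup {\sf B}_X$.

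For the inductive step I would assume $R_{\emptyset,t-1} \subseteq {\sf B}_X$ and take an arbitrary edge $e = (d,h) \in R_{\emptyset,t} \setminus R_{\emptyset,t-1} = P_{\emptyset,t} \setminus Q_{\emptyset,t}$; it suffices to show $e \in {\sf B}_X$. Suppose, for contradiction, that $e \notin {\sf B}_X$. The induction hypothesis gives $E \setminus {\sf B}_X \subseteq E \setminus R_{\emptyset,t-1}$, and $e \in {\rm Ch}_d(E \setminus R_{\emptyset,t-1})$ is $\succsim_d$-maximal in the larger edge set; since $e$ itself lies in the smaller set $E \setminus {\sf B}_X$, every competing edge of $d$ in the smaller set also lies in the larger set, so $e$ remains $\succsim_d$-maximal there, i.e.\ $e \in {\rm Ch}_d(E \setminus {\sf B}_X) \subseteq P$. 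By the fixed-point property, $e \succ_h g$ for every edge $g \in (P \cup {\sf B}_X)(h)$ with $g \neq e$.

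On the other hand, $e \notin Q_{\emptyset,t}$ means $e \notin {\rm Ch}_H(P_{\emptyset,t} \cup R_{\emptyset,t-1})$, so there is an edge $f = (p,h) \in (P_{\emptyset,t} \cup R_{\emptyset,t-1})(h)$ with $f \neq e$ and $f \succsim_h e$. The crux is to show $f \in (P \cup {\sf B}_X)(h)$. If $f \in R_{\emptyset,t-1}$, the induction hypothesis gives $f \in {\sf B}_X$ directly. Otherwise $f \in P_{\emptyset,t} = {\rm Ch}_D(E \setminus R_{\emptyset,t-1})$ is $\succsim_p$-maximal for $p$; if $f \in {\sf B}_X$ we are done, and if $f \notin {\sf B}_X$ then the same ``optimality transfers to the smaller set'' argument used for $e$ yields $f \in {\rm Ch}_p(E \setminus {\sf B}_X) \subseteq P$. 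In every case $f \in (P \cup {\sf B}_X)(h)$, so $e \succ_h f$, contradicting $f \succsim_h e$. Hence $e \in {\sf B}_X$, which closes the induction.

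The main obstacle---and indeed the whole point of the lemma---is that the two runs of Algorithm~\ref{alg:pre-process} need not proceed in lockstep: they have different numbers of iterations $k_{\emptyset}, k_X$ and incomparable intermediate sets $P_{\emptyset,t}, P_{X,t}$, so a naive step-by-step coupling of the two runs fails. The reframing that makes the argument work is to compare each intermediate set $R_{\emptyset,t}$ of the $\emptyset$-run against the \emph{single} final set ${\sf B}_X$ of the $X$-run, exploiting the fixed-point property of ${\sf B}_X$ rather than any per-step invariant. The only delicate technical point, used twice, is that a $\succsim_d$-maximal edge of $E \setminus R_{\emptyset,t-1}$ which happens to lie in the smaller set $E \setminus {\sf B}_X$ is still $\succsim_d$-maximal there; this is precisely where the induction hypothesis $R_{\emptyset,t-1} \subseteq {\sf B}_X$ is invoked.
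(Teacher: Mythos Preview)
Your proof is correct, but it takes a genuinely different route from the paper's. The paper proves the stronger step-by-step invariant $R_{\emptyset,t} \subseteq R_{X,t}$ for all $t$: it first pads the shorter of the two runs by setting $P_{Z,t} := P_{Z,k_Z}$, $Q_{Z,t} := Q_{Z,k_Z}$, $R_{Z,t} := R_{Z,k_Z}$ for $t > k_Z$, verifies via an auxiliary claim that the defining recursions \eqref{eq:same} still hold for the padded values, and then couples the two runs step by step. Your approach instead compares each $R_{\emptyset,t}$ directly against the \emph{final} set ${\sf B}_X$, using the fixed-point identity $P \subseteq {\rm Ch}_H(P \cup {\sf B}_X)$ as the sole input from the $X$-run. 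This sidesteps the padding device and the auxiliary claim entirely, at the price of not obtaining the intermediate containments $R_{\emptyset,t} \subseteq R_{X,t}$ (which are not used elsewhere anyway). One small remark: your closing paragraph asserts that ``a naive step-by-step coupling of the two runs fails''; in fact the paper shows that such a coupling \emph{does} work once one pads the sequences, so it would be more accurate to say that the coupling requires the extra bookkeeping you chose to avoid.
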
 
\begin{proof}
Let $k := \max\{k_{\emptyset},k_X\}$. 
For each symbol $Z \in \{\emptyset,X\}$, 
if $k_Z < k$, then 
we define 
$P_{Z,t} := P_{Z,k_Z}$,
$Q_{Z,t} := Q_{Z,k_Z}$, and 
$R_{Z,t} := R_{Z,k_Z}$ 
for each integer $t \in [k] \setminus [k_Z]$. 

\begin{claim}
Let $Z$ be a symbol in $\{\emptyset,X\}$.
Then for every integer $t \in [k]$, 
\begin{equation} \label{eq:same}
\begin{split}
P_{Z,t} & = {\rm Ch}_D(E \setminus R_{Z,t-1}), \\ 
Q_{Z,t} & = {\rm Ch}_H(P_{Z,t} \cup R_{Z,t-1}) \cap P_{Z,t}, \\  
R_{Z,t} & = R_{Z,t-1} \cup (P_{Z,t} \setminus Q_{Z,t}).   
\end{split}
\end{equation}
\end{claim}
\begin{proof}
Let $\ell$ be an integer in $[k]$. 
If $\ell \in [k_Z]$, then 
the definition of 
Algorithm~\ref{alg:pre-process}
implies that 
\eqref{eq:same} holds
when $t = \ell$. 
Thus, we consider the case where 
$\ell \notin [k_Z]$. 
Notice that if $\ell = k_Z+1$, then 
the definition of 
Algorithm~\ref{alg:pre-process} 
implies that 
$R_{Z,\ell-1} = R_{Z,\ell-2}$. 
Furthermore, if $\ell > k_Z + 1$, then 
$R_{Z,\ell-1} = R_{Z,\ell-2} = R_{Z,k_Z}$. 
Thus, in both cases, we have 
$R_{Z,\ell-1} = R_{Z,\ell-2}$. 

Assume that 
\eqref{eq:same} holds when $t = \ell -1$.
Since $P_{Z,\ell} = P_{Z,\ell-1} = P_{Z,k_Z}$, 
\begin{equation*}
P_{Z,\ell} = P_{Z,\ell-1}
= 
{\rm Ch}_D(E \setminus R_{Z,\ell-2})
= 
{\rm Ch}_D(E \setminus R_{Z,\ell-1}). 
\end{equation*}
Since $Q_{Z,\ell} = Q_{Z,\ell-1} = Q_{Z,k_Z}$,
\begin{equation*}
Q_{Z,\ell} = Q_{Z,\ell-1}
= {\rm Ch}_H(P_{Z,\ell-1} \cup R_{Z,\ell-2}) \cap P_{Z,\ell-1}
= {\rm Ch}_H(P_{Z,\ell} \cup R_{Z,\ell-1}) \cap P_{Z,\ell}. 
\end{equation*}
Since $R_{Z,\ell} = R_{Z,\ell-1} = R_{Z,k_Z}$,
\begin{equation*}
R_{Z,\ell} = R_{Z,\ell-1}
= R_{Z,\ell-2} \cup (P_{Z,\ell-1} \setminus Q_{Z,\ell-1})
= R_{Z,\ell-1} \cup (P_{Z,\ell} \setminus Q_{Z,\ell}). 
\end{equation*}
This completes the proof. 
\end{proof} 

In what follows, we prove that 
$R_{\emptyset,t} \subseteq R_{X,t}$ 
for every integer $t \in [k]$.
Notice that $R_{\emptyset,0} = \emptyset$. 

Let $t$ be an integer in $\{0\} \cup [k-1]$, and 
we assume that 
$R_{\emptyset,t} \subseteq R_{X,t}$.
Then we prove that 
$R_{\emptyset,t+1} \subseteq R_{X,t+1}$. 
Let $e = (d,h)$ be an edge in 
$R_{\emptyset,t+1} \setminus R_{\emptyset,t}$.
Then 
$e \in {\rm Ch}_D(E \setminus R_{\emptyset,t})$, and 
there exists an edge 
$f \in {\rm Ch}_D(E\setminus R_{\emptyset,t}) \cup R_{\emptyset,t}$
such that $f \neq e$ and $f \succsim_h e$. 
If $e \in R_{X,t}$, then the proof is done. 
Thus, we can assume that 
$e \notin R_{X,t}$. 
Since $R_{\emptyset,t} \subseteq R_{X,t}$, 
$e \in {\rm Ch}_D(E \setminus R_{X,t})$. 
Thus, we prove that 
there exists an edge 
$g \in {\rm Ch}_D(E\setminus R_{X,t}) \cup R_{X,t}$
such that $g \neq e$ and $g \succsim_h e$. 

If $f \in R_{\emptyset,t}$, then 
since $R_{\emptyset,t} \subseteq R_{X,t}$, 
we have $f \in R_{X,t}$.
Thus, the proof is done. 
Assume that $f \in {\rm Ch}_D(E\setminus R_{\emptyset,t})$. 
If $f \in R_{X,t}$, then the proof is done. 
If $f \notin R_{X,t}$, then
since $R_{\emptyset,t} \subseteq R_{X,t}$, we have  
$f \in {\rm Ch}_D(E\setminus R_{X,t})$.
This completes the proof.
\end{proof} 

\section{Deleting Agents on Both Sides} 
\label{section:both_sides} 

The goal of this section is to prove that 
Problem~2 is NP-complete. 

\begin{theorem}
Problem~2 is NP-complete. 
\end{theorem}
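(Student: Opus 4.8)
The plan is to establish both membership in NP and NP-hardness, with the latter routed through a clean characterization of feasibility on instances whose preferences are complete ties.

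For membership, a subset $X \subseteq V$ with $|X \cap D| \le q_1$ and $|X \cap H| \le q_2$ is a certificate: given $X$, we test whether $X \in \mathcal{F}$ by running the existence test of \cite{Irving94, Manlove99} on $G\langle V \setminus X \rangle$ (that is, Algorithm~\ref{alg:pre-process} applied to that induced subgraph together with the emptiness check of critical hospitals recalled at the end of Section~\ref{section:one_side}). This runs in polynomial time, so Problem~2 lies in NP.

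The key tool for hardness is a characterization on \emph{complete-tie} instances, meaning those for which $e \succsim_v f$ for every vertex $v$ and all $e,f \in E(v)$, so that each agent is indifferent among all its incident edges. On such an instance, for any matching $\mu$ and any edge $e=(d,h)\notin\mu$ we always have $e \succsim_d \mu(d)$ and $e \succsim_h \mu(h)$ (by the tie if the relevant endpoint is matched, and by the condition $e \succsim_v \emptyset$ if it is unmatched); hence every non-matching edge blocks $\mu$. I will prove from this that a matching is super-stable if and only if it equals $E\langle V\setminus X\rangle$, so a super-stable matching exists if and only if $G\langle V \setminus X \rangle$ has maximum degree at most $1$. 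This identifies Problem~2, restricted to complete-tie instances, with the purely combinatorial problem of deleting at most $q_1$ vertices of $D$ and at most $q_2$ vertices of $H$ from a bipartite graph so that the remainder has maximum degree at most $1$ (the two-sided budget version of the $3$-path vertex cover, equivalently dissociation, problem on bipartite graphs). I will then reduce from the single-budget bipartite version, which is known to be NP-complete, using a budget-balancing gadget: given a single budget $k$, add $k$ vertex-disjoint paths $\alpha_i - \gamma_i - \beta_i$ with $\alpha_i,\beta_i \in D$ and $\gamma_i \in H$, and set $q_1 = q_2 = k$. Each such path is a forbidden configuration that must be destroyed at a cost of exactly one deletion, chargeable to either side (delete $\alpha_i$ or $\beta_i$ from $D$, or $\gamma_i$ from $H$). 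A short counting argument shows that the $k$ gadgets donate a total budget of exactly $k$ to the original graph, split freely between $D$ and $H$, so the two-budget instance is feasible precisely when the original graph admits a solution of total size at most $k$. Composing with the complete-tie equivalence gives a polynomial-time many-one reduction into Problem~2.

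The main obstacle is making both budgets genuinely necessary. Since Problem~1 (deletion on one side only) is polynomial by Theorem~\ref{theorem:problem_1}, any correct reduction must use the caps $q_1$ and $q_2$ in a coupled way: a construction satisfiable by deleting agents on a single side would itself be polynomial and could not be hard. This is exactly what the balancing gadget enforces, and it is also why complete-tie gadgets are the right primitive — in such a gadget any obstruction can also be repaired by deleting the shared central agent, so the task is inherently a covering/dissociation problem with a per-side cap rather than a clean one-sided selection, and the delicate point is arranging for both caps to bind at once.
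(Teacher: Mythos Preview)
Your membership argument and the complete-tie characterization match the paper: both observe that under all-ties preferences a super-stable matching exists in $G\langle V\setminus X\rangle$ if and only if that induced subgraph has maximum degree at most $1$. Where you diverge is the hardness source. The paper reduces directly from \textsc{Minimum Coverage}: doctors are the sets $T_i$, hospitals are the ground elements $s_j$ together with a private partner $t_i$ for each $T_i$, and one sets $q_1=m-x$, $q_2=y$; keeping $T_i$ (i.e., $i\in I$) forces deleting every $s_j\in T_i$ so that only the edge $(T_i,t_i)$ survives. Thus the two caps arise \emph{intrinsically} from the two axes of \textsc{Minimum Coverage} (how many sets to keep, how many elements are covered), and no auxiliary gadget is needed. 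Your route instead identifies the complete-tie restriction of Problem~2 with the two-budget bipartite dissociation problem, imports the single-budget bipartite hardness (Yannakakis; see also Orlovich et al.), and converts the single budget $k$ into two caps $q_1=q_2=k$ via $k$ disjoint $P_3$ gadgets; your counting argument (each gadget consumes at least one deletion, and the side is freely choosable) is correct in both directions. Both proofs are valid; the paper's is a one-shot reduction with a tailor-made instance, while yours is more modular, makes the link to the dissociation number explicit, and isolates a reusable budget-balancing trick, at the cost of relying on an external bipartite hardness result you should cite precisely.
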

\begin{proof}
For every subset $X \subseteq V$, 
we can determine whether there exists a super-stable matching in 
$G\langle V \setminus X \rangle$ in polynomial time 
by using the algorithm in  
\cite{Irving94, Manlove99}.
Thus, Problem~2 is in NP.
In what follows, 
we prove the NP-hardness of Problem~2. 

We prove the ${\rm NP}$-hardness of Problem~2 by reduction 
from the decision version of {\sc Minimum Coverage}. 
In this problem, we are given a finite set $S = \{s_1,s_2,\dots,s_n\}$, 
subsets $T_1, T_2, \dots, T_m$ of $S$, 
and non-negative integers $x, y$ such that $x \le m$ 
and $y \le n$.
Then the goal of this problem is to determine whether there exists a subset 
$I \subseteq [m]$
such that $|I| = x$ and $|\bigcup_{i \in I} T_i| \le y$.
It is known that this problem is ${\rm NP}$-complete~\cite{Vinterbo02}.

Assume that we are given an instance of 
{\sc Minimum Coverage}. 
We construct an instance of Problem~2 as follows. 
Define 
\begin{equation*}
D := \{T_1,T_2,\dots,T_m\}, \ \  
H := \{s_1,s_2,\dots,s_n,t_1,t_2,\dots,t_m\}, 
\end{equation*}
where
$t_1,t_2,\dots,t_m$ are new elements. 
Then for each doctor $T_i \in D$ and
each hospital $s_j \in H$, $E$ contains the edge $(T_i,s_j)$ if and only if 
$s_j \in T_i$. 
For each doctor $T_i \in D$, 
$E$ contains the edge $(T_i,t_i)$.  
Define $q_1 := m - x$ and 
$q_2 := y$. 
Furthermore, for each vertex $v \in V$, we define $\succsim_v$ in such a way that 
$e \succsim_v f$ and 
$f \succsim_v e$
for every pair of edges $e,f \in E(v)$. 
Clearly, this reduction can be done in polynomial time. 

Assume that there exists a feasible solution $I$ to the given instance of 
{\sc Minimum Coverage}. 
Define 
\begin{equation*}
X := \{T_i \mid i \notin I\} 
\cup \{s_j \in S \mid s_j \in \textstyle{\bigcup_{i \in I} T_i}\}. 
\end{equation*}
Then we have 
\begin{equation*}
|D \cap X| = m - |I| = m - x = q_1, \ \ \
|H \cap X| = |\textstyle{\bigcup_{i \in I} T_i}| \le y = q_2.
\end{equation*}
Thus, 
what remains is to prove that there exists a super-stable matching in 
$G\langle V \setminus X \rangle$. 
Define the matching $\mu$ in $G\langle V \setminus X \rangle$ 
by $\mu := \{(T_i,t_i) \mid i \in I\}$. 
Then since $E\langle V \setminus X \rangle \cap E(T_i) = \{(T_i,t_i)\}$ 
for every integer $i \in I$ and $D \setminus X = \{T_i \mid i \in I\}$, 
$\mu$ is super-stable. 

Assume that there exists an element $X \in \mathcal{F}$ such that 
$|D \cap X| \le q_1$ and 
$|H \cap X| \le q_2$. 
In addition, 
we assume that $X$ minimizes 
$|\{t_1,t_2,\dots,t_m\} \cap X|$ among all the elements 
$Y \in \mathcal{F}$ such that 
$|D \cap Y| \le q_1$ and 
$|H \cap Y| \le q_2$. 
Since $X \in \mathcal{F}$, 
there exists a super-stable matching $\mu$ in $G\langle V \setminus X \rangle$. 
Define 
\begin{equation*}
I := \{i \in [m] \mid T_i \notin X\}. 
\end{equation*}
Then we have 
$|I| = m - |D \cap X| \ge m - q_1 = x$.

\begin{claim} \label{claim_1:hardness} 
$\bigcup_{i \in I} T_i \subseteq X$. 
\end{claim}
\begin{proof}
First, we prove that 
$|T_i \setminus X| \le 1$ holds for every integer $i \in I$.
Assume that there exists an integer $i \in I$ such that 
$|T_i \setminus X| \ge 2$.
In this case, even if $\mu(T_i) \neq \emptyset$, 
there exists an element $s \in T_i \setminus X$ such that 
$(T_i, s) \notin \mu$. 
However, this implies that $\mu$ is not super-stable. 

Let $I^{\circ}$ be the set of integers $i \in I$ such that 
$|T_i \setminus X| = 1$. 
For each integer $i \in I^{\circ}$, we define 
$h_i$ as the element in $T_i \setminus X$. 
Since $\mu$ is super-stable, 
$t_i \in X$ and $(T_i,h_i) \in \mu$ for every 
integer $i \in I^{\circ}$. 
Define $\mu^{\circ}$ as the matching in $G$ obtained from $\mu$ 
by replacing $(T_i,h_i)$ with $(T_i,t_i)$ for all the integers $i \in I^{\circ}$.
Then $\mu^{\circ}$ is super-stable, and 
$(X \setminus \{t_i \mid i \in I^{\circ}\}) \cup \{h_i \mid i \in I^{\circ}\}$
is feasible. 
This contradicts the assumption on $X$. 
This completes the proof.
\end{proof}

Claim~\ref{claim_1:hardness} implies that 
$|\bigcup_{i \in I} T_i| \le |H \cap X| \le q_2 = y$. 
Let $J$ be a subset of $I$ such that 
$|J| = x$. 
Then we have $|\bigcup_{i \in J} T_i| \le y$. 
Thus, $J$ is a feasible solution to the given instance 
of {\sc Minimum Coverage}.
This completes the proof. 
\end{proof} 

\bibliographystyle{plain}
\bibliography{super_mod_bib}

\end{document}